\newtheorem{theorem}{Theorem}[section]
\newtheorem{lemma}[theorem]{Lemma}
\newcommand{\R}{\mathbb{R}}
\newcommand{\seco}{{\prime\prime}}
\renewcommand{\sec}{\seco}
\newcommand{\eps}{\varepsilon}
\newcommand{\var}{\mathrm{Var}\,}
\newcommand{\ud}{\,d}
\def\G{\mathcal{G}}
\def\ds{\displaystyle}
\def\E{\mathbf{E}}
\newcommand{\M}{\mathcal{M}}
\newcommand{\La}{\mathcal{L}}
\newcommand{\C}{\mathcal{C}}
\newcommand{\abs}[1]{\left\lvert#1\right\rvert}
\newcommand{\norm}[1]{\left\lVert#1\right\rVert}
\newcommand{\p}{\partial}
\newcommand{\ind}{\ensuremath{\, \mathbbm{1}}}
\renewcommand{\tilde}{\widetilde}
\newcommand{\psii}{f}
\renewcommand{\L}{\mathbf{L}}
\renewcommand{\hat}{\widehat}
\renewcommand{\geq}{\geqslant}
\renewcommand{\le}{\leqslant}
\renewcommand{\ge}{\geqslant}
\renewcommand{\L}{\mathrm{L}}
\newcommand{\s}{\nu}
\renewcommand{\H}{I}
\newcommand{\h}{i}
\renewcommand{\phi}{\varphi}
\renewcommand{\P}{P}
\pgfplotsset{compat=newest}
\definecolor{k4}{rgb}{0.8,0.8,0.8}
\definecolor{k3}{rgb}{0.6,0.6,0.6}
\definecolor{k2}{rgb}{0.4,0.4,0.4}
\definecolor{k1}{rgb}{0.2,0.2,0.2}
\newcommand\restr[2]{{
  \left.\kern-\nulldelimiterspace 
  #1 
  \vphantom{\big|} 
  \right|_{#2} 
  }}
\title{Existence of recombination-selection equilibria \\
for sexual populations} 
\author{Thibault Bourgeron, Vincent Calvez,  Jimmy Garnier, Thomas Lepoutre}
\date{March 15, 2017}
\begin{document}
\maketitle


\begin{abstract} We study a birth and death model for the adapatation of a sexual population to an environment. The population is structured by a phenotypical trait, and, possibly, an age variable. Recombination is modeled by Fisher's infinitesimal operator. We prove the existence of principal eigenelements for the corresponding eigenproblem. As the infinitesimal operator is $1$-homogeneous but nor linear nor monotone, the general Kre\u{\i}n-Rutman theory cannot be applied to this problem.
\end{abstract}


\medskip

\textbf{Keywords} recombination, selection, infinitesimal model, non-homogeneous environment, adaptation, structured populations, eigenproblem, quantitative genetics 
%
%

\section{Introduction}

\subsection{The infinitesimal model}

We first recall the definition and some properties of the infinitesimal model, see \cite{burger, doebeli, metz} for biological motivation, and \cite{MR,DJMP,DJMR,LMP} for mathematical properties of this kind of models.
The infinitesimal model provides a simple and robust model describing inheritance of quantitative traits. It can be rigorously derived, see \cite{BEV}, from a discrete model with Mendelian inheritance, mutation and environmental noise, when the genetic component of the trait is purely additive (non-epistasis case) or when epistasis is not consistent in direction, as the number of underlying loci tends to infinity.

\medskip

In this model males and females are not distinguished. This makes senses, for instance, if they have the same distribution (\emph{e.g.} for hermaphrodites).
Mating is assumed to be random and uniform among the population.
The number of offsprings is assumed to be proportional to the density (of females).
Let us consider a density of individuals $f(z)$ structured by a continuous phenotypical trait $z \in \R$. Sexual reprodution involves two parents, having the traits $z^\prime$ and $z^{\prime\prime}$ who give birth to an offspring having the trait $z$.
In the infinitesimal model the genetic component of offspring traits follows a normal distribution centered the average of the parental traits $\frac{z^{\prime}+z^{\prime\prime}}{2}$, while this distribution has a variance that is independent of the parental values and which remains constant. The density of descendants is normalized to be proportional to the density of individuals $f$. This leads to consider the following mixing operator:
\begin{equation}\label{eq:def-G}
\G f(z):=\dfrac1{\int_\R f(x)\, dx} \, \iint_{\R^2} f(z^\prime) f(z^\sec)\, G\left(z - \dfrac{z^\prime + z^\sec}2\right) \ud z^\prime \ud z^\sec,
\end{equation}
where $G$ is the gaussian distribution with the given variance $V_{LE}/2$, where $V_{LE}$ is the variance at linkage equilibrium, see \cite{bulmer, TB, tufto, barfield, huisman}.

\medskip

As noticed in \cite{TB}, and used in \cite{MR} for numerical simulations, the operator $\G$ can be defined as a double convolution product:
\begin{equation}\label{eq:def-Gconv}
\G f (z) =\dfrac4{\int_\R f(x)\, \ud x} \, \left( G * f(2\cdot) * f(2\cdot)\right)(z),
\end{equation}
or equivalently through its Fourier transform, with the convention $\hat{f}(\xi) = \int_\R e^{-i x \xi} f(x) \ud x$: 
\begin{equation}\label{eq:def-GFourier}
\hat{\mathcal \G f} (\xi) = \dfrac1{\hat{f}(0)} \, \hat{G}(\xi) \hat{f}(\xi/2)^2.
\end{equation}

\medskip

Beyond the biological background, we do not assume that the probability density function $G$ is a gaussian function but we consider slightly more general cases.

\subsection{The recombination-selection model without age structure}

To model the adaptation of a population to an environment, let us consider a population density $f(t,z)$ where $t\ge 0$ is the time variable and $z$ the trait variable.
We assume the time variable $t$ to be continuous. This implies that overlapping generations exist that is more than one breeding generation is present at any time.
For the sake of simplicity we assume that the birth rate is constant and that selection is taken into account through the heterogeneous mortality rate $\mu(z)$.

\medskip

These assumptions lead us to consider the following recombination-selection equation:
\begin{equation} \label{eq:noage}
\p_t f(t,z) = - \mu(z) f(t,z) + \beta \, \G(f(t,\cdot))(z),
\end{equation}
where $\G$ is defined by \eqref{eq:def-G}.
The model \eqref{eq:noage} will be referred to as the homogeneous model.

\medskip

The death rate $\mu(z)$ is assumed to be a symmetric function in the trait variable $z$, so that the mortality is minimal at $z=0$.
We further assume that $\mu(0)=0$, without loss of generality.
The term $\mu f$ models selection: individuals having the optimal trait $z=0$ have the minimal mortality rate, so are more likely to be maintained in the population.
The second term $\beta \, \G(f)$ is the birth term and models inheritance of the traits by the sexual reproduction, at birth.


\medskip

For $\mu = \beta$ constant, one can check that the size of the population is constant and the mean phenotypical trait is conserved.
It has been shown, see \cite{bulmer, TB}, that \eqref{eq:noage} has a unique steady-state, that is the Gaussian distribution of variance $\sigma^2 = V_{LE}$. Furthermore, it can be shown that the distribution $f(t,\cdot)$ converges exponentially fast to the Gaussian distribution in the sense of Wasserstein distances, see \cite{MaR}.

\subsection{The age-structured recombination-selection model}

The antagonistic pleiotropy theory explains senescence by the selection of genes with positive effects at young ages even though they can be deleterious later in life, see \cite{williams}. This idea was formalized by Hamilton, \cite{hamilton}, and developped by Charlesworth, \emph{cf.} \cite{charlesworth1980, charlesworth1994, charlesworth2001}, and recently numerically studied in \cite{CR}. To model ageing of individuals we thus introduce an age variable $a \ge 0$ in Equation \eqref{eq:noage}, see \cite{perthame} for classical age-structured PDEs. This leads to consider the age-structured recombination-selection equation:
\begin{equation}\label{eq:age}\begin{cases}
\partial_t f(t,a,z) + \partial_a f(t,a,z) + \mu(a,z) f(t,a,z) = 0, \medskip\\
\displaystyle f(t,0,z) = \G\Big(\int_0^\infty\!\!\beta(a)f(t,a,\cdot)da\Big)(z),
\end{cases}
\end{equation} 
where $\G$ is defined by \eqref{eq:def-G}. The death rate $\mu(a,z)$ and the birth rate $\beta(a)$ can depend on the age variable $a$.
The first equation models that the age is increasing at speed one while the mortality is acting as for the homogeneous model \eqref{eq:noage}.
The second equation models the inheritance of the trait at birth (age $0$).

\medskip

\textbf{Examples.} For different models, the age-specific effect of mutations are considered to affect a single age class or a range of age class.
Different assumptions about the age-specific effect of mutations translate into different patterns of age-specific mortality, see \cite{abrams, charlesworth2001, baudisch, moorad, wachter}. In the previous age-continuous setting, some of these assumptions can be translated into considering death rates of the form:
\begin{equation} \label{examples}
\mu(a,z) =  \abs{z}^\alpha \delta_{a=a^*}, \quad \text{or} \quad  \abs{z}^\alpha \ind_{a\ge a^*},
\end{equation}
where $\alpha>0$ and $a^*>0$.

\subsection{The principal eigenelements}

As the operators $f \mapsto \beta f$ and $f\mapsto \mu \G(f)$ are $1$-homogeneous, special solutions to \eqref{eq:noage} of the form $f(t,z) = e^{\lambda t} F(z)$, resp. solutions to \eqref{eq:age} of the form $f(t,a,z) = e^{\lambda t} F(a,z)$, are characterized by pairs $(\lambda,F)$ solving the following spectral problems. The real number $\lambda$ is the rate at which the number of individuals grows $(\lambda \ge 0$) or decays ($\lambda \le 0$). 

\medskip

For the homogeneous model \eqref{eq:noage} the spectral problem is to find a pair $(\lambda, F)$ solution of:
\begin{equation} \label{eq:pb1}
\begin{cases}
\ds  \lambda F(z) + \mu(z) F(z) = \G \left(F \right)(z), \medskip\\
\ds \int_\R F(z) \ud z = 1.
\end{cases}
\end{equation}

\medskip

For the age-structured model \eqref{eq:age} the spectral problem is to find a pair $(\lambda, F)$ solution of:
\begin{equation} \label{eq:pb2}
\begin{cases}
\ds  \lambda F(a,z) + \partial_a F(a,z) + \mu(a,z) F(a,z) = 0, \medskip\\
\ds F(0,z)= \G\Big( \int_0^\infty{\beta(a)F(a,\cdot)\,da} \Big)(z), \medskip\\
\ds \int_\R F(0,z) \ud z = 1.
\end{cases}
\end{equation}

As Problem \eqref{eq:pb2} is $1$-homogeneous with respect to $F$, we chose to normalize the density of newborns to be $1$.

\bigskip

The operator $\G$ defined by \eqref{eq:def-G} is not linear, so that the classical Kre\u{\i}n-Rutman theory, see \cite{krein, dautraylions}, cannot be applied to this problem.
The operator $\G$ is positively $1$-homogeneous and compact on subsets of $\L^1(\R)$ with bounded variance, \emph{cf.} proof of Theorem \ref{th1}.
But the operator $\G$ is not monotone, because of the division by $\int_\R f \ud x$, so the generalization of Kre\u{\i}n-Rutman theory to $1$-homogeneous operators cannot be applied neither, see for instance \cite{mahadevan}.

\section{Existence of recombination-selection equilibria}

\subsection{The model without age structure, assumptions and statements}

{\bf Assumptions on $G$.} The probability density function $G$ is (essentially) positive, symmetric, uniformly continuous and it has a finite second moment.

\medskip

\noindent {\bf Assumptions on $\mu$.}
\begin{enumerate}
   \item The death rate $\mu(z)$ is non-negative, symmetric, non-decreasing for $z\ge 0$ and satisfies $\mu(0)~=~0$.
   \item We further assume integrability conditions at $0$ and $\infty$:
   \begin{equation} \label{hyp:th1-1}
\frac1{\mu} \not\in \L^1(-1,1),
\end{equation}
   \begin{equation} \label{hyp:th1-2}
\frac1{\mu} \in \L^1(1,\infty).
\end{equation}
\end{enumerate}

A typical example is $\mu(z) = \abs{z}^\alpha$ with $\alpha>1$. Assumption \eqref{hyp:th1-1} will be referred to as weak selection, in contrast to strong selection, \emph{cf.} \S \ref{strong}.
It can be rephrased as: $H\left(\restr{\mu}{(-1,1)}\right)$ is zero, where $H$ is the harmonic mean defined by:
\[
H(f) = \left(\int_\R \frac{dz}{f(z)}\right)^{-1}.
\]

A similar weak selection assumption in the case of asexual reproduction is assumed in \cite{martinroques}.

\begin{theorem}[weak selection] \label{th1} Let $G$ be a function satisfying the Assumptions on $G$. We consider the resulting operator $\G$ defined by \eqref{eq:def-G}. Let $\beta$ be a positive number and $\mu$ a function satisfying the Assumptions on $\mu$. Then, there exists a solution $(\lambda,F)$ to Problem \eqref{eq:pb1} with $\lambda >0$ and $F$ a non-negative $\L^1(\R)$ function. \end{theorem}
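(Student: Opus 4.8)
The natural strategy is a fixed-point argument on the normalization constant. For a fixed parameter $\rho > 0$, consider the linear-looking equation $\lambda F + \mu F = \rho^{-1} (G * F(2\cdot) * F(2\cdot))$ — but since $\G$ is $1$-homogeneous this reduces to studying, for $F$ with $\int F = 1$, the map $F \mapsto (\lambda + \mu)^{-1} \G(F)$ and hunting for $\lambda$ such that this has a fixed point in the probability simplex. Concretely, I would proceed as follows.

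\medskip

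\textbf{Step 1 (a priori bound on $\lambda$ and reformulation).} Integrating \eqref{eq:pb1} against $dz$, using that $\int_\R \G(F) \ud z = \int_\R F \ud z = 1$, gives $\lambda = 1 - \int_\R \mu F \ud z$, so necessarily $\lambda \le 1$; positivity of $\lambda$ will come from the weak-selection hypothesis \eqref{hyp:th1-1}, which forces mass to concentrate near $z = 0$ where $\mu$ is small. For fixed $\lambda \in (0,1)$ define the operator $T_\lambda F := (\lambda + \mu(\cdot))^{-1} \G(F)$ on the set $\mathcal{K}_M := \{ F \in \L^1(\R) : F \ge 0,\ \int_\R F = 1,\ \int_\R z^2 F(z) \ud z \le M \}$. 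The key closedness point is that the variance of $\G(F)$ is controlled: from \eqref{eq:def-GFourier} or directly, $\var \G(F) = \tfrac12 \var F + \tfrac12 V_{LE}$ when $F$ is centered, so $\G$ is a contraction on the variance, and dividing by $\lambda + \mu$ (which is bounded below by $\lambda > 0$ and grows) only helps; hence for $M$ large $T_\lambda$ maps $\mathcal{K}_M$ into itself after renormalization. The operator $\G$ is compact on $\mathcal{K}_M$ (uniform integrability from the variance bound plus the smoothing by convolution with $G$), so $T_\lambda$ is compact, and Schauder's theorem yields a fixed point $F_\lambda$ of the renormalized map, i.e. a solution of $\lambda F_\lambda + \mu F_\lambda = c(\lambda)\, \G(F_\lambda)$ for some scalar $c(\lambda) > 0$.

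\medskip

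\textbf{Step 2 (adjusting $\lambda$ so that $c(\lambda) = 1$).} Integrating the fixed-point relation gives $c(\lambda) = \lambda + \int_\R \mu F_\lambda \ud z =: \lambda + m(\lambda)$. I want to choose $\lambda$ so that $c(\lambda) = 1$. As $\lambda \to 0^+$, the weak-selection assumption $1/\mu \notin \L^1(-1,1)$ should force $m(\lambda) \to 0$: heuristically, near $z=0$ the balance $(\lambda + \mu) F_\lambda \sim \G(F_\lambda)(0) > 0$ makes $F_\lambda(z) \sim \G(F_\lambda)(0)/(\lambda + \mu(z))$, and $\int \mu F_\lambda \sim \G(F_\lambda)(0) \int \mu/(\lambda+\mu)$, which tends to $0$ by dominated/monotone convergence precisely because $1/\mu$ is non-integrable at $0$ (the mass escapes to the singular set, so the $\mu$-weighted mass vanishes). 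Thus $c(0^+) = 0 < 1$. On the other hand, as $\lambda$ increases $c(\lambda) = \lambda + m(\lambda) \ge \lambda$ eventually exceeds $1$. A continuity/connectedness argument in $\lambda$ (using compactness to extract a limiting eigenpair along a sequence, or a topological-degree version of the fixed-point argument to get continuity of the solution branch) then produces $\lambda^* \in (0,1)$ with $c(\lambda^*) = 1$, giving a genuine solution of \eqref{eq:pb1}.

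\medskip

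\textbf{Main obstacle.} The delicate point is Step 2: making rigorous that $m(\lambda) \to 0$ as $\lambda \to 0^+$ and that one can select $\lambda$ continuously. The difficulty is that $F_\lambda$ is only obtained through a non-constructive fixed point, so I do not have an explicit formula; I must instead derive a lower bound $F_\lambda(z) \gtrsim 1/(\lambda + \mu(z))$ on a fixed neighborhood of $0$ (which follows once $\G(F_\lambda)(0)$ is bounded below uniformly in $\lambda$ — itself needing a lower bound on the mass of $F_\lambda$ near $0$, a mild bootstrap using the positivity and smoothing of $\G$), and combine it with the integrability dichotomy \eqref{hyp:th1-1}–\eqref{hyp:th1-2} to pin down the limiting behavior of $m(\lambda)$. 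Uniform moment bounds on the family $\{F_\lambda\}$ are what allow the compactness extraction and hence the passage to the limit; I expect the bulk of the technical work, and the place where the non-monotonicity of $\G$ bites (ruling out the cleaner Krein–Rutman route), to be concentrated here.
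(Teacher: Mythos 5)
Your Step 1 is essentially sound (modulo the points below), but Step 2 contains the genuine gap, and it is exactly the gap that the paper's construction is designed to avoid. Having obtained, for each fixed $\lambda$, \emph{some} fixed point $F_\lambda$ of the renormalized map via Schauder, you want to run an intermediate-value argument on $c(\lambda)=\lambda+\int\mu F_\lambda$. But Schauder gives neither uniqueness of $F_\lambda$ nor any continuity of the selection $\lambda\mapsto F_\lambda$, so $c(\cdot)$ is a multivalued object and "a continuity/connectedness argument" is not available off the shelf; making it work would require a nontrivial degree-theoretic continuation argument that you do not supply. The paper sidesteps this entirely by inverting the roles of $\lambda$ and $F$: for each candidate $F$ in the convex set it solves the \emph{scalar} dispersion relation $\int_\R \beta(\lambda+\mu)^{-1}\,\G(F)\,\ud z=1$ for $\lambda$. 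The left-hand side is strictly decreasing in $\lambda$, tends to $0$ at $+\infty$, and tends to $+\infty$ as $\lambda\to0^+$ by the weak-selection hypothesis \eqref{hyp:th1-1} combined with a uniform pointwise lower bound $\G(F)\ge\tfrac14\inf_{|z|\le 2r}G$ near the origin (itself coming from the uniform variance bound via Chebyshev). This defines a single-valued map $F\mapsto\lambda(F)$ with a uniform lower bound $\lambda(F)\ge\eps>0$, whose continuity is elementary, and a \emph{single} application of Schauder to $\M(F)=\beta(\lambda(F)+\mu)^{-1}\G(F)$ finishes the proof. No continuation in $\lambda$ is needed. If you want to salvage your route you must either prove continuity of a branch $\lambda\mapsto F_\lambda$ or switch to the paper's formulation.

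Two secondary points. First, your invariant set $\mathcal K_M$ omits the symmetry constraint $F(-z)=F(z)$, and your claim that dividing by $\lambda+\mu$ "only helps" the variance is asserted, not proved: the identity $\var\G(F)=\var G+\tfrac12\var F$ requires $\int zF=0$ (otherwise there is an extra term $-\tfrac12(\int zF)^2$ to track, and multiplication by $(\lambda+\mu)^{-1}$ can move the mean), and the fact that multiplying by the symmetric, radially non-increasing weight $(\lambda+\mu)^{-1}$ does not increase the normalized second moment is precisely the content of the paper's key Lemma \ref{lem:Var_sphi} --- a correlation-type inequality proved by an integration by parts against $\ud(z^2)$ and a sign-change argument. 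It is true, but it is the technical heart of the variance stabilization and cannot be waved at. Second, your heuristic that $m(\lambda)=\int\mu F_\lambda\to0$ as written is off: $\int_{-1}^1\mu/(\lambda+\mu)\,\ud z$ tends to $2$, not $0$; the correct mechanism is that the normalizing constant $c(\lambda)=\left(\int\G(F_\lambda)/(\lambda+\mu)\right)^{-1}$ tends to $0$ because the denominator blows up by \eqref{hyp:th1-1}, and only after dividing by it does the $\mu$-weighted mass vanish. This is recoverable, but in the paper's formulation it is exactly the statement $\lambda(F)\ge\eps$, obtained once and uniformly rather than as a limit along an unconstructed branch.
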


\medskip

\begin{figure} \label{fig1}
\centering

\includegraphics[width=15cm]{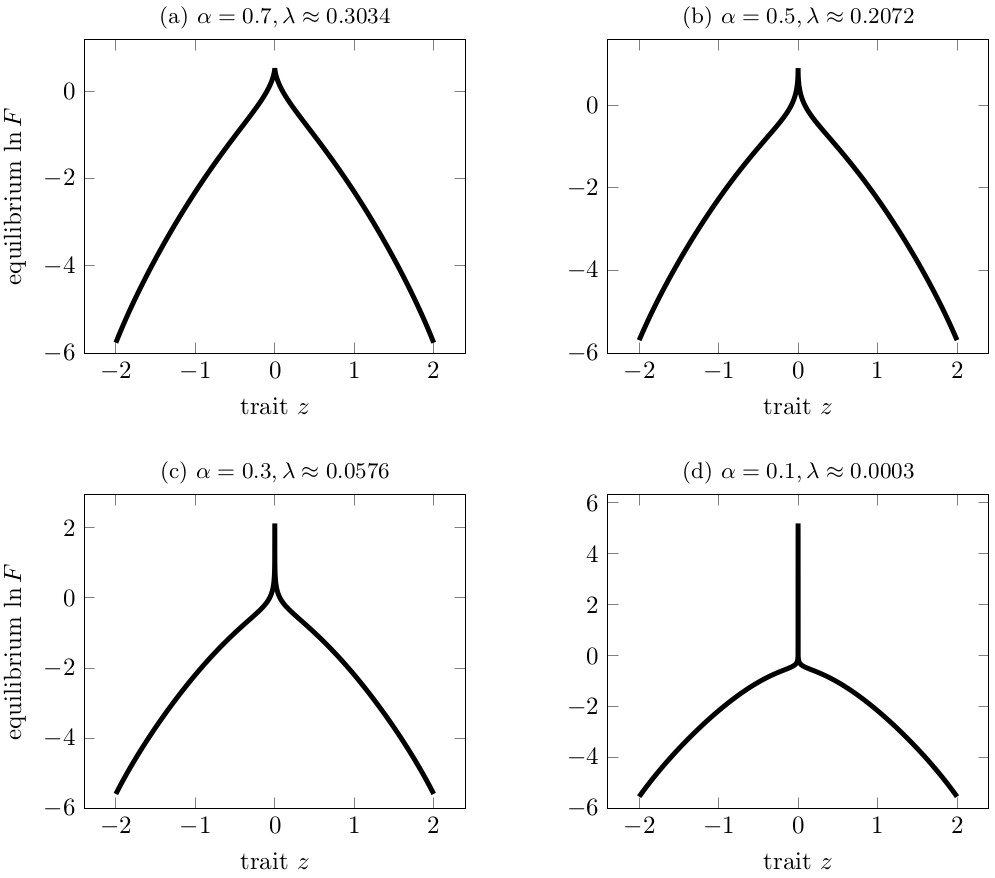}

\caption[]{The selection-mutation equilibrium $F(z)$ obtained from \eqref{eq:noage},
for $\mu(z) = \abs{z}^{\alpha} + \abs{z}^2$. It may have a Dirac mass at $z=0$ when the limiting eigenvalue $\lambda=0$ is reached. }
\end{figure}

Assumption \eqref{hyp:th1-1} can be relaxed. In this case it may happen that the limiting value $\lambda=0$ is reached and $F$ has a Dirac mass at the origin, see Figure 1. 
To include more singular behaviour at the origin, we consider the normed vector space:
\begin{equation} \label{eq:defD}
\E = \R \delta + \L^1(\R) = \left\{x \delta + f, x\in \R, f\in \L^{1}(\R) \right\},
\end{equation}
where $\delta$ is the Dirac measure at $\{z=0\}$. It is endowed with a product norm on $\R \times \L^1(\R)$.

\begin{theorem}[strong selection] \label{th3} Let $G$ be a function satisfying the Assumptions on $G$. We consider the resulting operator $\G$ defined by \eqref{eq:def-G}. Let $\beta$ be a positive number and $\mu$ a function satisfying the Assumption \eqref{hyp:th1-2}. Then, there exists a solution $(\lambda,F)$ to Problem \eqref{eq:pb1} with $\lambda \ge 0$ and $F \in \E$, defined by \eqref{eq:defD}, the singular and the regular parts of which are non-negative. \end{theorem}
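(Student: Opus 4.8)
The plan is to obtain the equilibrium as a limit of the weak-selection equilibria produced by Theorem \ref{th1}. Fix a function $\mu$ satisfying only \eqref{hyp:th1-2}, and for $n\ge 1$ set $\mu_n(z) = \max\!\big(\mu(z), \tfrac1n |z|\big)$, or more simply $\mu_n = \mu + \tfrac1n|z|$. Each $\mu_n$ is non-negative, symmetric, non-decreasing on $z\ge0$, vanishes at $0$, satisfies the weak-selection condition \eqref{hyp:th1-1} (since $1/\mu_n \le n/|z| \notin \L^1(-1,1)$ fails to be integrable near $0$... wait — we need $1/\mu_n \notin \L^1$, and indeed $1/\mu_n$ is \emph{smaller}, so this must be checked: in fact $1/\mu_n(z) \ge 1/(\mu(z)+\frac1n|z|)$ and near $0$, if $\mu$ grows at least linearly this could be integrable, so one instead takes $\mu_n(z) = \mu(z)\ind_{|z|\ge 1/n} + (\text{something }\sim |z|^{1/2}\text{ near }0)$ tailored so that \eqref{hyp:th1-1} holds and $\mu_n \uparrow \mu$ pointwise off the origin). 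With such a sequence $\mu_n \le \mu_{n+1} \le \mu$, $\mu_n \uparrow \mu$ on $\R\setminus\{0\}$, and $1/\mu_n \in \L^1(1,\infty)$ uniformly, Theorem \ref{th1} gives solutions $(\lambda_n, F_n)$ with $\lambda_n > 0$, $F_n \ge 0$, $\int_\R F_n = 1$, and $(\lambda_n + \mu_n)F_n = \G(F_n)$.

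The next step is to extract uniform estimates. First, monotonicity: since $\mu_n$ is increasing in $n$, one expects $\lambda_n$ to be non-increasing; more usefully, integrating the eigenrelation gives $\lambda_n + \int \mu_n F_n = \int \G(F_n) = \int F_n = 1$ (using that $\G$ preserves mass), so $\lambda_n \in [0,1]$ and $\int \mu_n F_n \le 1$ for all $n$. Up to a subsequence $\lambda_n \to \lambda \in [0,1]$. A uniform variance/tail bound on $F_n$ is the crucial ingredient: from $F_n = \G(F_n)/(\lambda_n+\mu_n)$ together with \eqref{hyp:th1-2} and the finite second moment of $G$, one controls $\int_{|z|\ge R} F_n$ and $\int |z|^2 F_n\, \ind_{|z|\le R}$ uniformly — essentially the same computation as in the proof of Theorem \ref{th1}, since the bound there does not degenerate as $\mu_n \to \mu$ because only \eqref{hyp:th1-2} (which holds uniformly) is used for the tail. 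This gives tightness of $(F_n)$ as a sequence of probability measures on $\R$, hence $F_n \rightharpoonup F$ weakly-$*$ for some probability measure $F$ on $\R$. Writing $F = x\delta + g$ with $x \ge 0$ and $g \ge 0$ the Lebesgue-absolutely-continuous part (and a possible singular-but-nonatomic part, which one argues away using the uniform control $\int_{|z|<\eps} \mu_n F_n \to 0$ as $\eps\to0$ forcing any concentration to sit at $z=0$), we land in the space $\E$.

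The main obstacle is passing to the limit in the nonlinear identity $(\lambda_n+\mu_n)F_n = \G(F_n)$. The right-hand side is the delicate one: $\G$ is not continuous for the weak-$*$ topology on all of $\E$ (the mass in the denominator $\hat F(0)$ is fine since $\int F_n = 1$, but the double convolution $G * F_n(2\cdot) * F_n(2\cdot)$ requires care when $F_n$ develops an atom). Here I would work on the Fourier side via \eqref{eq:def-GFourier}: $\widehat{\G(F_n)}(\xi) = \hat G(\xi)\,\hat F_n(\xi/2)^2$, and tightness gives $\hat F_n \to \hat F$ locally uniformly, so $\widehat{\G(F_n)} \to \hat G(\xi)\hat F(\xi/2)^2 = \widehat{\G(F)}(\xi)$ pointwise; this shows $\G(F_n) \rightharpoonup \G(F)$ as measures, and one checks $\G(F)$ is in fact absolutely continuous (a convolution with the $\L^1$ density $G$ is always a function, even when $F$ has atoms — this is why the regular part absorbs everything on the birth side). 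On the left, $\mu_n F_n \rightharpoonup \mu F$ away from the origin by monotone convergence $\mu_n \uparrow \mu$ combined with weak convergence of $F_n$, while at the origin $\mu_n(0)=\mu(0)=0$ kills the atom; the possible loss of mass to infinity is ruled out by tightness. Matching regular and singular parts then yields: the singular part of $F$ is $x\delta$ with $x \ge 0$, concentrated at $z=0$ where $\lambda$ alone multiplies it, so $\lambda x = 0$ — consistent with $\lambda \ge 0$ — and the regular part satisfies $(\lambda+\mu)g = \G(F)$ a.e., with $g \ge 0$ and $\G(F)\ge0$. Finally $\int F(0,z)$-type normalization: $\int_\R F = \lim \int_\R F_n = 1$ is inherited, so $(\lambda, F)$ solves \eqref{eq:pb1} in $\E$ with the stated sign conditions, completing the proof.
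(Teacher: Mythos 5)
Your route is genuinely different from the paper's: the paper does not approximate and pass to the limit, but runs the Schauder argument of Theorem \ref{th1} directly on the space $\E=\R\delta+\L^1(\R)$, replacing $\M$ by the operator $\tilde\M$ of \eqref{eq:def-M3}, which explicitly inserts a Dirac mass $m\delta$ of exactly the size needed to restore the normalization whenever the dispersion relation \eqref{eq:deflambda} has no root $\lambda\ge 0$; all the compactness and continuity estimates are then reused on the regular part via \eqref{eq:GE}. Your approximation-plus-compactness strategy is a legitimate alternative and would, if completed, exhibit the strong-selection equilibrium as a limit of weak-selection ones, but as written it has two genuine gaps.

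First, the approximating sequence is not actually constructed. Your fallback suggestion of making $\mu_n\sim|z|^{1/2}$ near the origin does \emph{not} give \eqref{hyp:th1-1}, since $|z|^{-1/2}\in\L^1(-1,1)$; you need at least linear vanishing, e.g.\ $\mu_n(z)=\min(\mu(z),n|z|)$ for $|z|\le1$ glued to $\mu$ for $|z|>1$ (the naive global $\min(\mu,n|z|)$ destroys \eqref{hyp:th1-2} whenever $\mu$ is superlinear). One must check simultaneously: symmetry, monotonicity on $[0,\infty)$, $\mu_n(0)=0$, \eqref{hyp:th1-1}, \eqref{hyp:th1-2}, and $\mu_n\to\mu$ in a form strong enough for the limit passage. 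This is doable but is precisely the part you leave as "tailored so that \eqref{hyp:th1-1} holds." Second, and more seriously, the identification of the limit of the left-hand side $\mu_nF_n$ is only gestured at. You must (i) rule out a singular part of $F$ other than an atom at $z=0$ --- this uses that if \eqref{hyp:th1-1} fails then $\{\mu=0\}=\{0\}$ by monotonicity, while if \eqref{hyp:th1-1} holds Theorem \ref{th1} already applies, so the dichotomy must be made explicit; (ii) justify $\int\phi\,\mu_nF_n\to\int\phi\,\mu\,g$ for a weight $\mu$ that is merely measurable, possibly unbounded and discontinuous, against a sequence converging only weakly-$*$ (monotonicity of $\mu$ gives a.e.\ continuity and the absolutely continuous part $g$ does not charge the countable discontinuity set, but near $z=0$ you additionally need $\mu(0^+)=0$ to prevent the atom from contributing a spurious term $\mu(0^+)x\,\delta$ on the left --- an edge case your argument does not cover, whereas the paper's fixed-point construction is insensitive to it); and (iii) verify at the end that the regular part $g=\beta\,\G(F)/(\lambda+\mu)$ is in $\L^1$ and that no mass is lost, which does follow from your tightness bound and \eqref{hyp:th1-2} but should be said. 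The Fourier/L\'evy argument for $\G(F_n)\rightharpoonup\G(F)$ and the absolute continuity of $\G(F)$ are correct and are the cleanest part of your proposal; the paper gets the same facts from the explicit formula \eqref{eq:GE}.
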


For a convolution operator a similar result is obtained in \cite{coville2013}.

\medskip

These recombination-selection equilibria are built as fixed point of some operator linked to $\G$.
Assuming that $G$ and $\mu$ are symmetric allows us to build symmetric equilibria.
This operator is shown to stabilize the variance, thus $G$ is needed to have a finite second moment, \emph{cf.} Lemma \ref{lem:moments}. Also, the key Lemma \ref{lem:Var_sphi} shows that the multiplication by the mortality rate $\mu$ does not affect this property if $\mu$ is a symmetric and non-decreasing (for $z\ge0$) function. Lastly, the uniform continuity of $G$ and the integrability of $1/\mu$ at infinity are needed to establish the compacity of this operator, \emph{cf.} Proofs of the Theorems.

\subsection{The age-structured model, assumptions and statements}

{\bf Assumptions on $G$.} The probability density function $G$ is symmetric
and it has a finite second moment.

\medskip

\noindent {\bf Assumptions on $\beta$ and $\mu$.}
\begin{enumerate}
   \item The reproduction rate $\beta$ and the death rate $\mu(a,z)$ are non-negative. 
   \item The death rate $z\mapsto \mu(a,z)$ is symmetric, non-decreasing for $z\ge 0$ and satisfies $\mu(a,0)=0$.
   \item Our last and main assumption links the reproduction rate $\beta$ and the death rate $\mu$.
There exists $\lambda_0 \in \R$ such that:
\begin{equation} \label{hyp:th2}
\forall z \in \R \quad 1 \le \int_0^{\infty} \beta(a) e^{-\lambda_0 a} e^{-\int_0^a \mu(a^\prime,z) \ud a^\prime} \ud a \le \int_0^{\infty} \beta(a) e^{-\lambda_0 a} \ud a < \infty.
\end{equation}
\end{enumerate}

Assumption \eqref{hyp:th2} is a balance assumption between birth and death rates which is uniform in the trait variable. As the weak selection assumption \eqref{hyp:th1-1} for the homogeneous case, Assumption \eqref{hyp:th2} avoids concentration at $z=0$. For the death rates given by \eqref{examples} and a constant birth rate $\beta$, Assumption \eqref{hyp:th2} is satisfied as soon as $a^*>0$. 
From a technical viewpoint, Assumption \eqref{hyp:th2} ensures the existence of eigenvalues bounded from below by $\lambda_0$, \emph{cf.} Lemma \ref{lem:Laplace}.

\begin{theorem} \label{th2} Let $G$ be a function satisfying the Assumptions on $G$. We consider the resulting operator $\G$ defined by \eqref{eq:def-G}. Let $\beta(a)$ and $\mu(a,z)$ be functions satisfying the Assumptions on $\beta$ and $\mu$. Then, there exists a solution $(\lambda,F)$ to system \eqref{eq:pb2} with $\lambda \ge \lambda_0$. \end{theorem}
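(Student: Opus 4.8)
The plan is to reduce the system \eqref{eq:pb2} to a single fixed-point problem for the density of newborns, mirroring the homogeneous case. Integrating the first equation of \eqref{eq:pb2} along characteristics, any solution is of the form $F(a,z) = \phi(z)\,e^{-\lambda a}\,e^{-\int_0^a \mu(a',z)\,da'}$ with $\phi := F(0,\cdot)$. Plugging this into the renewal boundary condition and writing
\[
K_\lambda(z) := \int_0^\infty \beta(a)\,e^{-\lambda a}\,e^{-\int_0^a \mu(a',z)\,da'}\,da ,
\]
the system \eqref{eq:pb2} becomes equivalent to: find a probability density $\phi$ on $\R$ and $\lambda \ge \lambda_0$ with $\phi = \G(K_\lambda\phi)$. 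Observe that $z\mapsto K_\lambda(z)$ inherits from the Assumptions on $\mu$ the property of being symmetric and non-increasing on $[0,\infty)$, while Assumption \eqref{hyp:th2} yields $1 \le K_{\lambda_0}(z) \le \int_0^\infty\beta(a)e^{-\lambda_0 a}\,da =: C < \infty$ for every $z$, hence $0 \le K_\lambda \le K_{\lambda_0}\le C$ whenever $\lambda\ge\lambda_0$. This is, up to the substitution $u=\G F$, the same structure $u = \G(w\,u)$ with $w$ symmetric and non-increasing on $[0,\infty)$ that governs the homogeneous problem \eqref{eq:pb1}.

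Next I would fix $\lambda$ as a function of $\phi$. Since $\G$ preserves mass, $\int_\R\G(K_\lambda\phi) = \int_\R K_\lambda\phi$, and $\lambda\mapsto\int_\R K_\lambda(z)\phi(z)\,dz$ is continuous and strictly decreasing from $+\infty$ (as $\lambda\to-\infty$) to $0$ (as $\lambda\to+\infty$); let $\Lambda(\phi)$ be the unique value where it equals $1$. From $\int_\R K_{\lambda_0}\phi\ge\int_\R\phi = 1$ and monotonicity one gets $\Lambda(\phi)\ge\lambda_0$; a Chebyshev estimate using that $K_{\lambda_0}$ is bounded and non-increasing shows that $\Lambda(\phi)$ stays bounded above as long as $\var\phi$ does. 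Setting $\Phi(\phi) := \G(K_{\Lambda(\phi)}\phi)$, any fixed point $\phi^*$ of $\Phi$ produces the desired pair, with $\lambda^* = \Lambda(\phi^*)\ge\lambda_0$ and $F(a,z) = \phi^*(z)e^{-\lambda^* a}e^{-\int_0^a\mu(a',z)\,da'}$.

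I would obtain a fixed point of $\Phi$ by Schauder's theorem on $\mathcal{C} := \{\phi\in\L^1(\R): \phi\ge 0,\ \int_\R\phi = 1,\ \phi\text{ symmetric},\ \var\phi\le 2\var G\}$, a nonempty closed bounded convex subset of $\L^1(\R)$. That $\Phi$ stabilizes $\mathcal{C}$ follows from the moment identity $\var\G f = \var G + \tfrac12\var f$ (Lemma \ref{lem:moments}) combined with the fact that multiplication by the symmetric weight $K_{\Lambda(\phi)}$, non-increasing on $[0,\infty)$, does not increase the variance --- the correlation inequality underlying Lemma \ref{lem:Var_sphi} --- whence $\var\Phi(\phi) \le \var G + \tfrac12\var\phi\le 2\var G$; symmetry of $\Phi(\phi)$ and the normalization $\int_\R\Phi(\phi) = \int_\R K_{\Lambda(\phi)}\phi = 1$ are immediate. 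Continuity of $\Phi$ on $\mathcal{C}$ for the $\L^1$ norm comes from continuity of $\phi\mapsto\Lambda(\phi)$ (strict monotonicity in $\lambda$ plus dominated convergence, using $K_\lambda\le C$), then $\L^1$-continuity of $\phi\mapsto K_{\Lambda(\phi)}\phi$ (dominated convergence with dominating function $2C\,\phi$), then $\L^1$-continuity of $\G$ restricted to nonnegative functions of unit mass.

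The main obstacle is the relative compactness of $\Phi(\mathcal{C})$ in $\L^1(\R)$. Writing $\G g = 4\,G*g(2\cdot)*g(2\cdot)$ with $g = K_{\Lambda(\phi)}\phi$ of unit mass, the uniform bound $\var\Phi(\phi)\le 2\var G$ gives tightness (Chebyshev), while the final convolution against the fixed kernel $G\in\L^1(\R)$ gives equicontinuity of translations, uniformly over $\mathcal{C}$: with $h := g(2\cdot)*g(2\cdot)$, so that $\|h\|_1 = \tfrac14$, one has $\|\Phi(\phi)(\cdot+\tau) - \Phi(\phi)\|_1 = 4\|(G(\cdot+\tau)-G)*h\|_1 \le \|G(\cdot+\tau)-G\|_1 \to 0$ as $\tau\to 0$. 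By the Kolmogorov--Riesz criterion, $\overline{\Phi(\mathcal{C})}$ is then compact in $\L^1(\R)$; note that, because the weight $K_\lambda$ is applied \emph{before} $\G$, the mere integrability of $G$ suffices here, in contrast with Theorem \ref{th1} where uniform continuity of $G$ is used. Schauder's fixed point theorem gives $\phi^*\in\mathcal{C}$ with $\phi^* = \G(K_{\lambda^*}\phi^*)$, $\lambda^* := \Lambda(\phi^*)\ge\lambda_0$, and the corresponding $F$ above solves \eqref{eq:pb2}, which would complete the proof.
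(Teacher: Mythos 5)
Your proposal is correct and follows essentially the same route as the paper: the same reduction to the newborn density via the survival function, the same dispersion relation defining $\lambda(\phi)$ with the lower bound $\lambda\ge\lambda_0$ coming from $\int K_{\lambda_0}\phi\ge 1$, the same convex set with the variance bound $2\var G$ preserved via Lemmas \ref{lem:moments} and \ref{lem:Var_sphi}, and the same Fréchet--Kolmogorov compactness and continuity arguments. The only (harmless) imprecision is the claim that $\lambda\mapsto\int K_\lambda\phi$ tends to $+\infty$ as $\lambda\to-\infty$; this is unnecessary since, as you also note, existence of the root already follows from the value at $\lambda_0$ being $\ge 1$ together with the monotone decay to $0$.
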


\section{Proofs of the existence theorems}

We first prove Theorem \ref{th1} because some estimates are easier and its proof contains all the steps necessary for the proof of Theorem \ref{th2}. Namely, the dispersion relation gives implicitly the principal eigenvalue corresponding to a principal eigenvector. This allows us to define a mixing operator, the fixed points of which are principal eigenelements. Proving the theorem comes down to build a convex set conserved by this operator in order to apply the Schauder fixed point theorem.

\subsection{Confinement and technical lemmas}

The proofs of these theorems rely on the variance stabilizing property of the operator $\G$, \emph{i.e.} $\G$ has a confinement effect in the trait variable.
The proofs do not use the confinement effect of the selection operator $f\mapsto \mu f$. Lemma \ref{lem:moments} describes how the infinitesimal operator acts on the first moments. Lemma \ref{lem:Var_sphi}, which is the key lemma, implies that the multiplication by a symmetric, non-increasing function reduces the variance. The third and last lemma is a short technical lemma concerning the Laplace transform. It is only used in the proof of Theorem \ref{th2}. 

\begin{lemma} \label{lem:moments} (first moments of the operator $\G$)
\begin{enumerate}
\item Let $\psii \in \L^1(\R)$ with $\int_\R \psii(z) \ud z \ne 0$. Then:
\[\int_\R \G \psii(z) \ud z = \int_\R \psii(z) \ud z, \quad \text{and} \quad \int_\R z \G \psii(z) \ud z = \int_\R z \psii(z) \ud z.\]
\item Let $\psii$ be an $\L^1(\R)$ function satisfying:
$\int_\R \psii(z) \ud z =1$.
The following equality holds: \[ \int_\R \left( \G \psii(z) - \psii(z)\right) \, z^2 \, \ud z = \var G - \frac12 \var f,\]
where $\var \phi = \int_\R z^2 \phi(z) \, dz - \left(\int_\R z \phi(z) \, dz\right)^2$.
In addition if $\int_\R z f(z) \ud z = 0$, one obtains:
\[\var\G(\psii) = \var G + \frac12 \var \psii.\]
As a consequence, if $\var \psii \le 2 \var G$ then $\var \G(\psii) \le 2 \var G$.
\end{enumerate}
\end{lemma}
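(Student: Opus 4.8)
The plan is to read off the action of $\G$ on the monomials $1$, $z$, $z^2$ directly from the defining double integral \eqref{eq:def-G} (equivalently, from the Fourier identity \eqref{eq:def-GFourier}). First I would record the moments of $G$ that enter: since $G$ is a symmetric probability density with finite second moment, $\int_\R G(z)\,\ud z = 1$, $\int_\R z\,G(z)\,\ud z = 0$ and $\int_\R z^2 G(z)\,\ud z = \var G$ (and $G$ has a finite first absolute moment by Cauchy--Schwarz). The elementary observation is that, by the translation $z = w+s$,
\[
\int_\R G(z-w)\,\ud z = 1,\qquad \int_\R z\,G(z-w)\,\ud z = w,\qquad \int_\R z^2\,G(z-w)\,\ud z = w^2 + \var G
\]
for every $w\in\R$.

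Next I would insert these identities with $w = \tfrac{z'+z''}{2}$ into \eqref{eq:def-G} and exchange the order of integration, which reduces everything to moments of $f$. The constant test function gives $\int_\R \G f(z)\,\ud z = \tfrac{1}{\int f}\iint_{\R^2} f(z')f(z'')\,\ud z'\,\ud z'' = \int_\R f$; the test function $z$ gives $\int_\R z\,\G f(z)\,\ud z = \tfrac{1}{\int f}\iint_{\R^2}\tfrac{z'+z''}{2}\, f(z')f(z'')\,\ud z'\,\ud z''$, which equals $\int_\R z f(z)\,\ud z$ by symmetry of the integrand in $(z',z'')$. This establishes the first assertion (for the monomials whose moment of $f$ is finite).

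For the second assertion I would normalize $\int_\R f = 1$ and set $m_1 = \int_\R z f$, $m_2 = \int_\R z^2 f$. The test function $z^2$ then gives
\[
\int_\R z^2\,\G f(z)\,\ud z = \var G + \tfrac14 \iint_{\R^2}(z'+z'')^2 f(z')f(z'')\,\ud z'\,\ud z'' = \var G + \tfrac12 m_2 + \tfrac12 m_1^2 ,
\]
since $\iint_{\R^2}(z'+z'')^2 f(z')f(z'')\,\ud z'\,\ud z'' = 2m_2 + 2m_1^2$. Subtracting $m_2 = \int_\R z^2 f$ and using $\var f = m_2 - m_1^2$ yields $\int_\R(\G f - f)z^2\,\ud z = \var G - \tfrac12\var f$. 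Because the mean is preserved by the first assertion, $\int_\R z\,\G f = m_1$, so $\var\G(f) = \int_\R z^2\,\G f - m_1^2 = \var G + \tfrac12 m_2 - \tfrac12 m_1^2 = \var G + \tfrac12\var f$; under the extra hypothesis $\int_\R z f = 0$ this is exactly the displayed identity. Finally, if $\var f \le 2\var G$ then $\var\G(f) = \var G + \tfrac12\var f \le \var G + \var G = 2\var G$.

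The only step that is not bookkeeping is justifying the Fubini interchange, i.e. the finiteness of $\iint_{\R^2}\!\int_\R |z|^k\,G\!\big(z - \tfrac{z'+z''}{2}\big)\,|f(z')|\,|f(z'')|\,\ud z\,\ud z'\,\ud z''$ for $k\in\{0,1,2\}$; this follows from $\int_\R |z|^k G(z-w)\,\ud z \le C_k(1+|w|^k)$, with $C_k$ depending only on the first $k$ moments of $G$, together with the convexity bound $|\tfrac{z'+z''}{2}|^k \le \tfrac12(|z'|^k + |z''|^k)$ and the assumed finiteness of the corresponding moment of $f$. So I expect this verification, rather than any genuine difficulty, to be the main thing to get right; alternatively one can bypass Fubini and extract all three identities from the second-order Taylor expansion at $\xi = 0$ of $\hat{\G f}(\xi) = \hat f(0)^{-1}\hat G(\xi)\hat f(\xi/2)^2$, using $\hat G(\xi) = 1 - \tfrac{\var G}{2}\xi^2 + o(\xi^2)$.
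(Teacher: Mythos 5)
Your proof is correct and follows essentially the same route as the paper's: both test $\G f$ against the monomials $1$, $z$, $z^2$ and reduce everything to moments of $G$ and $f$ via Fubini and the shift $z \mapsto z + \frac{z'+z''}{2}$ (the paper packages this as a single symmetrized identity for $\int (\G f - f)\phi$, while you compute $\int \phi\, \G f$ directly, but the computation is the same). Your parenthetical observations --- that $\var \G(f) = \var G + \tfrac12 \var f$ already follows from mean preservation without the centering hypothesis, and that the Fubini interchange needs the corresponding moment of $f$ to be finite --- are correct and slightly more careful than the paper's write-up.
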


\begin{proof} 
For any function $\phi$, from \eqref{eq:def-G}, using $\int_\R G(z) \, dz = 1$, one obtains:
\begin{equation} \label{test}
\begin{split}
& \displaystyle\int_{\R} \left( \G \psii(z) - \psii(z) \right) \phi(z) \, \ud z \\
& = \dfrac1{\int_\R \psii(x)\, dx} \,  \displaystyle\iiint_{\R^3} \psii(z^\prime) \, \psii(z^\sec) \, G(z) \, \left[ \phi\left(z + \frac{z^\prime + z^\sec}2\right) - \frac12 \left( \phi(z^\prime) + \phi(z^\sec) \right) \right] \ud z^\prime \ud z^\sec \ud z.
\end{split}
\end{equation}

\begin{enumerate}
\item The first point is obtained for $\phi(z) = 1$, and $\phi(z) = z$ using $\int_\R z G(z) \ud z = 0$.
\item Again, for $\phi(z) = z^2$, equality \eqref{test} leads to the result. After integrating the equality:
\[
\left( z + \dfrac{z^\prime + z^\sec}2 \right)^2 - \frac12 \left( z^{\prime2} + z^{\sec2} \right)
= z^2 - \frac{1}{4} z^{\prime 2} - \frac{1}{4} z^{\sec 2} + z z^\prime + z z^\sec + \frac{1}{2} z^\prime z^\sec,
\]
one gets:
\begin{eqnarray*}
& \displaystyle\int_\R \left(\G\psii(z)-\psii(z)\right) \, z^2 \ud z \\
&= \displaystyle\int_\R z^2 G \ud z \left( \displaystyle\int_\R \psii \ud z \right)
- 2 \frac14 \displaystyle\int_\R z^2 \psii \ud z \displaystyle\int_\R G \ud z   + \, 2 \displaystyle\int_\R z G \ud z \displaystyle\int_\R z \psii \ud z + \frac12 \frac{\left( \displaystyle\int_\R z \psii \ud z \right)^2}{\displaystyle\int_\R \psii \ud z} \displaystyle\int_\R G \ud z \\
&= \var G - \displaystyle\frac12 \var f.
\end{eqnarray*}
\end{enumerate}
\end{proof}

\begin{lemma}\label{lem:Var_sphi}
Let $\nu$ be a measurable, non-negative, symmetric on $\R$, non-increasing on $[0,+\infty)$ function.
Let $W$ be a $W^{1,1}_{loc}$, non-negative, symmetric on $\R$, non-decreasing on $[0,+\infty)$ function.
The quadratic form 
\[
Q(\phi)= \int_\R \nu \phi W \ud x \int_\R \phi \ud x  - \int_\R \phi W \ud x \int_\R \nu \phi \ud x
\]
is non-positive on the set:
\[
\mathcal D = \left\{\phi \ge 0: \phi, \nu\phi \in \L^1(\R,\ud x) \cap \L^1(\R, W \ud x)\right\}.
\]
\end{lemma}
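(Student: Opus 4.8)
The plan is to reduce the statement to a one‑dimensional rearrangement inequality. Write $a = \int_\R \phi\,\ud x$, $b = \int_\R \phi W\,\ud x$, $c = \int_\R \nu\phi\,\ud x$, $d = \int_\R \nu\phi W\,\ud x$, so that $Q(\phi) = da - bc$. Since $\phi\ge 0$, these integrals define (up to normalization) a probability measure $\ud m = \phi\,\ud x / a$ on $\R$, and $Q(\phi)/a^2 = \E_m[\nu W] - \E_m[W]\,\E_m[\nu]$, i.e. $Q(\phi) = a^2\,\mathrm{Cov}_m(\nu, W)$. Thus the claim is exactly that $\nu$ and $W$ are negatively correlated under any symmetric-density probability measure. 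The standard tool is Chebyshev's sum/integral inequality (a.k.a. the FKG‑type inequality for monotone functions): if, under a measure $m$, two functions are ``oppositely monotone'' then their covariance is $\le 0$. Here the subtlety is that $\nu$ is non‑increasing on $[0,\infty)$ and $W$ is non‑decreasing on $[0,\infty)$, but both are \emph{symmetric}, so on all of $\R$ neither is globally monotone; monotonicity holds only after passing to $|x|$.

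So the first step is to symmetrize. Because $\nu$, $W$ are even, I would replace $\phi$ by its even part $\tilde\phi(x) = \tfrac12(\phi(x)+\phi(-x))\ge 0$: all four integrals $a,b,c,d$ are unchanged (the integrands $\phi$, $\phi W$, $\nu\phi$, $\nu\phi W$ have the same even parts as their $\tilde\phi$‑analogues since $W,\nu$ are even), so $Q(\phi)=Q(\tilde\phi)$ and we may assume $\phi$ even. Then push everything to $[0,\infty)$: with $g(x)=\phi(x)+\phi(-x)$ restricted to $x\ge 0$ (so $\int_0^\infty g = a$, etc.), and writing $N(x)=\nu(x)$, $U(x)=W(x)$ for $x\ge0$, we have $N$ non‑increasing, $U$ non‑decreasing on $[0,\infty)$, and
\[
Q(\phi) = \int_0^\infty NUg\,\ud x\int_0^\infty g\,\ud x - \int_0^\infty Ug\,\ud x\int_0^\infty Ng\,\ud x.
\]
This is precisely Chebyshev's integral inequality with weight $g\ge 0$: for $N$ non‑increasing and $U$ non‑decreasing one has $\int NUg\int g \le \int Ug\int Ng$, i.e. $Q(\phi)\le 0$.

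The cleanest self‑contained way to finish, rather than quoting Chebyshev, is the classical two‑point symmetrization:
\[
2\,Q(\phi) = \int_0^\infty\!\!\int_0^\infty \big(N(x)-N(y)\big)\big(U(x)-U(y)\big)\,g(x)g(y)\,\ud x\,\ud y,
\]
which is verified by expanding the product and using the bilinearity of the four integrals (this is the only routine computation in the argument). For $x,y\ge 0$ the factors $N(x)-N(y)$ and $U(x)-U(y)$ have opposite signs by the monotonicity of $N$ and $U$, and $g(x)g(y)\ge 0$, so the integrand is $\le 0$ pointwise, hence $Q(\phi)\le 0$. The hypotheses that $\phi,\nu\phi\in \L^1(\ud x)\cap\L^1(W\,\ud x)$ guarantee exactly that all four scalar integrals — and therefore the double integral — are finite and absolutely convergent, so the manipulations are legitimate; the $W^{1,1}_{loc}$ assumption on $W$ is not needed for this lemma per se (it is there for later use) beyond ensuring $W$ is an honest pointwise‑defined function. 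The only real point requiring care is the symmetrization reduction, to make sure replacing $\phi$ by its even part genuinely leaves all of $a,b,c,d$ invariant; once that is in hand the inequality is the standard correlation inequality for oppositely monotone functions.
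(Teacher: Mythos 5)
Your proof is correct, and it takes a genuinely different route from the paper's. You reduce $Q$ to a covariance and invoke the Chebyshev/FKG correlation inequality for oppositely monotone functions, made self-contained by the two-point symmetrization identity
\[
2\,Q(\phi)=\int_0^\infty\!\!\int_0^\infty \big(N(x)-N(y)\big)\big(U(x)-U(y)\big)\,g(x)g(y)\,\ud x\,\ud y ,
\]
whose integrand is pointwise non-positive; the preliminary reduction to even $\phi$ (all four integrals are unchanged because $\nu$ and $W$ are even, and the odd parts are absolutely integrable by the definition of $\mathcal D$) is the same normalization the paper performs via its bilinear form $B$. The paper instead integrates by parts against $\ud W$, writing $\tfrac14 Q(\phi)=\int_0^\infty\big(A F_{\nu\phi}-BF_\phi\big)\,\frac{\ud W}{\ud x}\ud x$ with $F_\phi(x)=\int_x^\infty\phi$, and shows $AF_{\nu\phi}-BF_\phi\le 0$ by a single-sign-change argument on $(A\nu-B)\phi$; this is why it assumes $W\in W^{1,1}_{loc}$ and checks the vanishing of the boundary term $F_\phi W$ at infinity. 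Your argument is more elementary and, as you correctly observe, dispenses with the regularity hypothesis on $W$ entirely (monotone and measurable suffices), at the cost of using the full monotonicity of $\nu$; the paper's integration-by-parts route only really uses that $A\nu-B$ crosses zero once, a slightly weaker structural fact, and is tailored to the application $W(z)=z^2$ where $\ud W/\ud x$ is explicit. Both proofs are complete; yours is arguably the cleaner one for the lemma as stated.
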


\begin{proof}
The form $Q$ is linear with respect to $W$ and vanishes if $W$ is a constant function.
Thus considering $W - W(0)$ we can assume that $W(0)=0$.

\medskip

Let us define the bilinear symmetric form:
\[
\begin{split}
B(\phi,\psi) = \frac12
\Big(\int_\R \nu \psi W \ud x \int_\R \phi \ud x 
+  \int_\R \nu \phi W \ud x \int_\R \psi \ud x\\
- \int_\R \phi W \ud x \int_\R \nu \psi \ud x
-\int_\R \psi W \ud x \int_\R \nu \phi \ud x\Big),
\end{split}
\]
such that: $Q(\phi)=B(\phi,\phi)$. As: $B(\phi,\psi)=0$, if $\phi$ is odd, considering the decomposition of a function $\phi = \phi_e+\phi_o$ as the sum of its even  and its odd components, leads to:
\[ Q(\phi) = Q(\phi_e) + B(\phi_o,\phi_o) + 2 B(\phi_e,\phi_o) = Q(\phi_e).\]

Thus the functions in $\mathcal D$ can be assumed to be even. If $\phi$ is even, then:
\begin{equation} \label{eq:Qpaire}
\frac14 Q(\phi)= \int_0^\infty \nu \phi W \ud x \int_0^\infty \phi \ud x  - \int_0^\infty \phi W \ud x \int_0^\infty \nu \phi \ud x.
\end{equation}

\medskip

For a non-negative function $\phi:[0,\infty)\to \R$ which belongs to $\L^1(\R)$ let us denote: 
\[
F_\phi(x) = \int_x^\infty \phi(z) \ud z. 
\]

For $\phi \in \L^1(W \ud x)$, as $W$ is non-decreasing, in the limit $x\to \infty$, we have: $F_\phi(x) W(x) \le \int_x^\infty \phi W \ud z \to 0$.
By integration by parts, using $W(0)=0$, we get:
\[
 \int_0^\infty \phi W \ud x = \int_0^\infty F_{\phi} \, \frac{\ud W}{\ud x} \ud x.
\]
Thus, \eqref{eq:Qpaire} can be changed to:
\[ 
\frac14 Q(\phi) = \int_0^\infty \left(A F_{\nu \phi}(x) - B F_{\phi}(x)\right) \frac{\ud W}{\ud x} \ud x, \qquad \text{with} \quad A = \int_0^\infty \phi \ud x, B = \int_0^\infty \phi \nu \ud x.
\]

As $\frac{\ud W}{\ud x} \ge 0$, it amounts to show that $I = A F_{\nu \phi} - B F_{\phi}$ is non-positive.
The derivative of $-I$ is: \[J := - \frac{\ud I}{\ud x} = (A\nu-B) \phi. \]
The function $J$ satisfies $\int_0^\infty J(z) \ud z = I(0)-I(\infty) = 0$ so that the function $A\nu-B$ changes of sign.
As $A\nu-B$ is non-increasing there exists $x_0 \in [0,\infty)$ such that: $J(x)\ge 0$ for $x\le x_0$ and $J(x)\le 0$ for $x\ge x_0$.
Hence, $I$ is non-increasing on $[0,x_0]$ and non-decreasing on $[x_0,\infty)$, hence non-positive because $I(0)=I(\infty)=0$.
\end{proof}

\begin{lemma} \label{lem:Laplace} Let $\phi:\R^+\to\R$ be a measurable function, which is non-negative and non-zero.
For $p \in \R$ let us define: $\La(p) = \int_0^\infty e^{-p z} \phi(z) \ud z \in [0,\infty]$.
If there exists $\lambda_0 \in \R$ such that $1\le \La(\lambda_0)<\infty$, then there exists a unique $p \in \R$ such that $\La(p)=1$. 
\end{lemma}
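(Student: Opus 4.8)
The plan is to treat $\La$ as a one–variable function of $p$ and invoke monotonicity plus the intermediate value theorem. The only structure we need is that $p\mapsto e^{-pz}$ is, for each fixed $z\ge 0$, non-increasing (strictly decreasing when $z>0$) and tends to $0$ as $p\to+\infty$; everything else is a routine passage to the limit under the integral sign, justified by monotone or dominated convergence. The hypothesis that $\phi$ is non-zero is what upgrades monotonicity to \emph{strict} monotonicity and thereby yields uniqueness; the hypothesis $1\le\La(\lambda_0)<\infty$ is what places us on the right side of the value $1$ and inside the domain of finiteness.

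\textbf{Steps.} First I would record that $\La:\R\to[0,\infty]$ is non-increasing, since $p\le p'$ gives $e^{-pz}\ge e^{-p'z}$ for all $z\ge 0$; consequently the set $D:=\{p\in\R:\La(p)<\infty\}$ is an interval unbounded to the right, and by hypothesis $[\lambda_0,\infty)\subseteq D$. Second, I would show $\La$ is strictly decreasing on $D$: for $p<p'$ in $D$,
\[
\La(p)-\La(p')=\int_0^\infty\bigl(e^{-pz}-e^{-p'z}\bigr)\phi(z)\ud z>0,
\]
because the integrand is non-negative and is strictly positive for $z>0$, while $\phi$ is not Lebesgue-a.e.\ zero on $(0,\infty)$. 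Third, I would establish the needed continuity on $[\lambda_0,\infty)$: right-continuity at $\lambda_0$ follows from monotone convergence (as $p\downarrow\lambda_0$, $e^{-pz}\phi(z)\uparrow e^{-\lambda_0 z}\phi(z)$), and continuity on $(\lambda_0,\infty)$ from dominated convergence, dominating $e^{-pz}\phi(z)$ by $e^{-qz}\phi(z)\in\L^1$ for any $q$ with $\lambda_0<q<p$. Fourth, I would note $\La(p)\to 0$ as $p\to+\infty$: for $p\ge\lambda_0$ we have $e^{-pz}\phi(z)\le e^{-\lambda_0 z}\phi(z)\in\L^1$ and $e^{-pz}\phi(z)\to 0$ for a.e.\ $z$, so dominated convergence applies. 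Finally, since $\La$ is continuous on $[\lambda_0,\infty)$ with $\La(\lambda_0)\ge 1$ and $\La(p)\to 0<1$, the intermediate value theorem produces $p^*\in[\lambda_0,\infty)$ with $\La(p^*)=1$; any solution $p$ of $\La(p)=1$ must lie in $D$ (otherwise $\La(p)=\infty$), and strict monotonicity of $\La$ on the interval $D$ forces $p=p^*$, giving uniqueness.

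\textbf{Main obstacle.} There is no serious obstacle here; the statement is a soft analytic fact. The one point deserving care is the treatment of the left endpoint of the domain of finiteness: rather than analysing where $\La$ first becomes infinite, it is cleaner to work only on $[\lambda_0,\infty)$, where finiteness and right-continuity are guaranteed directly by the hypothesis and monotone convergence, and to deduce uniqueness globally from strict monotonicity on the interval $D$ together with the observation that $\La\equiv\infty$ off $D$.
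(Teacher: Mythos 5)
Your proof is correct and follows essentially the same route as the paper's: monotonicity of $\La$, continuity on the half-line of finiteness, the limit $\La(p)\to 0$ as $p\to+\infty$, the intermediate value theorem for existence, and strict monotonicity for uniqueness. The only difference is cosmetic — the paper invokes analyticity of $\La$ on $(p_0,\infty)$ where you use monotone/dominated convergence, and you are somewhat more careful at the left endpoint $\lambda_0$, which if anything tightens the paper's rather terse closing remark.
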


\begin{proof} The function $\La(p)$ is finite for $p \ge \lambda_0$. Let us define the abscissa of convergence $p_0 = \inf \{ p \in \R \mid  \La(p) < \infty\} \in [-\infty,\lambda_0]$. Clearly, the function $\La$ is decreasing, analytic on $(p_0,\infty)$ and: $\lim_{p\to +\infty} \La(p) = 0$. Therefore there is at most one $p \in \R$ such that $\La(p)=1$. The existence of such an element is equivalent to $\La(p_0)\ge 1$. 
\end{proof}


\subsection{The recombination-selection equation without age structure, weak selection case}

\begin{proof}[Proof of theorem \ref{th1}] 
\begin{enumerate}
 \item For $\lambda> 0$ and $z\in\R$, or $\lambda\ge 0$ and $z\ne 0$, let us denote:
\begin{equation}\label{eq:nu1}
 \s(\lambda,z) = \frac{\beta}{\lambda + \mu(z)}.
\end{equation}

A solution $(\lambda,F)$ to \eqref{eq:pb1}, with $\lambda >0$, can be expressed as a fixed point using $\s$: \[ F = \s(\lambda,\cdot) \, \G(F). \]

Let us consider the following convex subset of $\L^1(\R)$:
\begin{equation} \label{eq:defC}
\mathcal{C}=\left\{ F\in \L^1(\R) \, \left\vert\,  F\geq 0, \int_\R F(z) \ud z =1,  F(-z)=F(z), \int_\R \abs{z}^2 F(z) \ud z \le 2 \var G \right.\right\}.
\end{equation}

The set $\mathcal{C}$ is a nonempty, bounded, closed, convex subset of $\L^1(\R)$. Regularization properties of the operator $\G$, defined as a double convolution \eqref{eq:def-Gconv}, makes the application of the Schauder fixed point theorem possible to the set $\mathcal{C}$ and the operator $\M$ defined by \eqref{eq:def-M1}. The construction of $\M$ is based on the following Claim.

\item  \textbf{Claim.} For any function $F$ in $\C$ there exists a unique real number $\lambda$ such that:
\begin{equation} \label{eq:deflambda}
 \int_\R \s(\lambda,z) \, \G(F)(z) \ud z = 1.
\end{equation}
This defines a function: $F \mapsto \lambda(F)$. The function $\lambda$ is bounded from below by a constant $\eps > 0$ which depends only on $G$ and $\mu$.

\medskip

The function $(0,\infty) \ni \lambda \mapsto \int_\R \s(\lambda,z) \, \G(F)(z) \ud z \in (0,\infty)$ is decreasing and tends to $0$ at infinity. So, it is enough to prove that its limit at $0$ is infinite.
Uniformly for $F$ in $\C$:
\begin{equation}\label{eq:tails}
\int_{\abs{z}>r} F(z) \ud z \le \frac{1}{r^2} \int_{\abs{z}>r} \abs{z}^2 F(z) \ud z \le \frac{1}{r^2} 2 \var G \to 0,
\end{equation}
so that there exists $r>0$ satisfying: $\int_{\abs{z}\le r} F \ud z \ge \frac12$.
By definition of $\G$, \eqref{eq:def-G}, for all $F$ in $\C$ and $z$ such that $\abs{z} \le r$, we obtain:
\begin{eqnarray*}
\G\big(F\big)(z) &\ge& \iint_{(z^\prime,z^\sec) \in [-r,r]^2} F(z^\prime) F(z^\sec)\, G\left(z - \dfrac{z^\prime + z^\sec}2\right) \ud z^\prime \ud z^\sec \\
&\ge& \left(\inf_{\abs{z} \le 2r} G\right) \left(\int_{\abs{z} \le r} F(z) \ud z\right)^2 \ge \left(\inf_{\abs{z} \le 2r} G\right) \times \frac14.
\end{eqnarray*}
So, we get:
\[
\int_\R  \s(\lambda,z) \, \G(F)(z) \ud z   \ge \int_{\abs{z}\le r} \s(\lambda,z) \, \G(F)(z) \ud z
\ge \frac14 \left(\inf_{\abs{z} \le 2r} G\right) \int_{\abs{z}\le r} \frac{\beta}{\lambda + \mu(z)} \ud z.
\]
Thanks to Fatou's lemma and Assumption \eqref{hyp:th1-1} the last quantity tends to $+\infty$ as $\lambda$ goes to $0^+$. In particular there exists $\eps >0$ such that:
\begin{equation} \label{eq:eps}
 \frac14 \left(\inf_{\abs{z} \le 2r} G\right) \int_{\abs{z}\le r} \frac{\beta}{\eps + \mu(z)} \ud z \ge 1,
\end{equation}
and we get the uniform lower bound for $\lambda$: $\lambda(F) \ge \eps$.
\item We define the mixing operator:
\begin{equation} \label{eq:def-M1}
\M(F) = \s(\lambda(F),\cdot) \, \G(F).
\end{equation}
\end{enumerate}

Proving Theorem \ref{th1} is equivalent to show that the operator $\M$ has a fixed point.

\begin{itemize}
\item The set $\C$ is preserved by $\M$. Let $F\in \C$. Clearly from \eqref{eq:def-G}, $\M F\ge 0$ and
thanks to the definition of $\lambda(F)$, $\int_\R \M F(z) \ud z =1$.
The equality $\M(F)(-z)=\M(F)(z)$ is a consequence of the symmetry of the death rate $\mu$ and the probability density function $G$.
The last point is a consequence of Lemmas \ref{lem:moments}, \ref{lem:Var_sphi} with $\nu = \s(\lambda,\cdot)$ and $W(z)=z^2$:
\[
\var \M(F) \le \var \G(F) = \var G + \frac12 \var  F \le 2 \var G.
\]

\item The set $\M(\mathcal{C})$ is relatively compact thanks to the Fréchet–Kolmogorov theorem.
As $\M(\C) \subset \C$, tails are uniformly bounded thanks to \eqref{eq:tails}.
Let $\tau_h$ be the translation by $h$ defined by: $\tau_h F = F(\cdot-h)$.
From the definition of $\M$, \eqref{eq:def-M1}, we get:
\begin{equation} \label{eq:t1}
\begin{split}
\norm{\tau_h (\M F) - \M F}_{\L^1(\R)}  &= \norm{\left(\tau_h (\s(\lambda(F),\cdot)) - \s(\lambda(F),\cdot)\right) \tau_h \left(\G F\right)
+ \s(\lambda(F),\cdot) \left(\tau_h \left(\G F\right) - \G F  \right)}_{\L^1(\R)}, \\
& \le \norm{\tau_h (\s(\lambda(F),\cdot)) - \s(\lambda(F),\cdot)}_{\L^1(\R)} \norm{\G F}_{\L^\infty(\R)}
+ \norm{\s(\lambda(F),\cdot)}_{\L^1(\R)} \norm{\tau_h \left(\G F\right) - \G F}_{\L^\infty(\R)}
\end{split}
\end{equation}

As $\lambda(F) \ge \eps$, we get:
\begin{equation} \label{eq:t1}
\begin{split}
\abs{\tau_h \s(\lambda(F),\cdot) - \s(\lambda(F),\cdot)} &= \frac{\abs{\mu - \tau_h \mu}}{(\lambda(F) + \tau_h \mu) (\lambda(F) + \mu)} \\
&\le \frac{\abs{\mu - \tau_h \mu}}{(\eps + \tau_h \mu) (\eps + \mu)} = \abs{\tau_h \s(\eps,\cdot) - \s(\eps,\cdot)}.
\end{split}
\end{equation}

In addition, by definition \eqref{eq:def-Gconv}, we have:
\begin{equation} \label{eq:t3}
\tau_h \left(\G F\right) - \G F = 4 (\tau_h G - G) * F(2 \cdot) * F(2\cdot).
\end{equation}

Using Young's inequality we get the uniform upper bound:
\[
\norm{\tau_h (\M F) - \M F}_{\L^1(\R)} \le \norm{\tau_h \s(\eps,\cdot) - \s(\eps,\cdot)}_{\L^1(\R)} \norm{G}_{\L^\infty(\R)}
+ \norm{\s(\eps,\cdot)}_{\L^1(\R)} \norm{\tau_h G - G}_{\L^\infty(\R)},
\]
which goes to $0$ as $h$ tends to $0$ because: $\s(\eps,\cdot)$ is in $\L^1(\R)$, see Assumption \eqref{hyp:th1-2}, and $G$ is uniformly continuous.

\item The operator $\M:\mathcal{C}\to\mathcal{C}$ is continuous.
Let $F_1, F_2 \in \C$, and let us denote: $\lambda_k = \lambda(F_k)$, $\s_k=\s(\lambda_k,\cdot)$, for $k=1,2$.
Using \eqref{eq:eps}, we have $\lambda_k \ge \eps$, and: $\norm{\s_k}_{\L^1(\R)} \le \norm{\s(\eps,\cdot)}_{\L^1(\R)} < \infty$. 

Writing:
\begin{equation} \label{eq:cont}
\begin{split}
\norm{\M(F_1)-\M(F_2)}_{\L^1(\R)} &= 4 \norm{(\s_1 - \s_2)\, \G(F_1) + \s_2\, (\G(F_1) - \G(F_2))}_{\L^1(\R)} \\
&\le 4 \norm{\s_1 - \s_2}_{\L^1(\R)} \norm{G}_{\L^\infty(\R)} + 4 \norm{\s(\eps,\cdot)}_{\L^1(\R)} \norm{\G(F_1) - \G(F_2)}_{\L^{\infty}(\R)}.
\end{split}
\end{equation}
it remains to show that the maps $\C \ni F \mapsto \mathcal \G(F) \in \L^{\infty}(\R)$ and  $\C \ni F \mapsto \s \in \L^1(\R)$ are continuous.

\medskip

Thanks to definition \eqref{eq:def-Gconv}, we have:
\begin{equation} \label{eq:a2b2}
\G(F_1) - \G(F_2) = 4\, G * (F_1 - F_2)(2\cdot) * (F_1 + F_2)(2\cdot).
\end{equation}
This implies that the first map is continuous thanks to Young's inequality.

\medskip

For the continuity of the second map we prove, as an intermediate stage, that $\C \ni F \mapsto \lambda \in \R$ is continuous. To do so let us fix $F_1$, define $\lambda = \lambda_2-\lambda_1$, and:
\begin{equation} \label{eq:I}
\H(\lambda) := \abs{\int_{\R} (\s_1 - \s_2) \, \G(F_1) \ud z}.
\end{equation}

\medskip

By definitions of the $\s_k, F_k$: $\int_\R \s_1 \G(F_1) \ud z = 1 = \int_\R \s_2 \G(F_2) \ud z$, and:
\begin{equation} \label{eq:maj}
\H(\lambda) = \abs{\int_{\R} (\s_1 - \s_2) \, \G(F_1) \ud z}
= \abs{ \int_\R \s_2 (\G(F_2)-\G(F_1)) \ud z}
\le \norm{\s(\eps,\cdot)}_{\L^1(\R)} \norm{\G(F_2)-\G(F_1)}_{\L^\infty(\R)}
\end{equation}
tends to $0$ as $\norm{F_1-F_2}_{\L^1(\R)}$ goes to $0$. 

We have:
\begin{equation} \label{eq:defIlambda}
\H(\lambda) = \abs{\int_\R (\s_1-\s_2) \, \G(F_1) \ud z} = \int_{\R} \frac{\abs{\lambda}}{(\lambda_1+\mu)(\lambda_1+\mu+\lambda)} \, \G(F_1) \ud z,
\end{equation}
where the last equality holds because the integrand has a constant sign.

It is easily checked from \eqref{eq:defIlambda} that the function $\H$ satisfies: $\H(0)=0$, it is continuous, strictly decreasing for $\lambda \le 0$ and strictly increasing for $\lambda \ge 0$. In particular: as $\H(\lambda)$ goes to $0$, $\abs{\lambda}$ tends to $0$.

\medskip

From \eqref{eq:maj}, we have: $\abs{\lambda}=\abs{\lambda_2-\lambda_1}$ goes to $0$.
This allows us to conclude thanks to Lebesgue's dominated convergence theorem and Assumption \eqref{hyp:th1-2}, because:
\begin{equation} \label{eq:contlambdanu}
\s_1 - \s_2 =(\lambda_1+\mu)^{-1}-(\lambda_2+\mu)^{-1}.
\end{equation}
\end{itemize}
\end{proof}

\subsection{The recombination-selection equation without age structure, strong selection case} \label{strong}

\begin{proof}[Proof of theorem \ref{th3}] 
\begin{enumerate}
 \item 

A solution $(\lambda,F)$ to \eqref{eq:pb1} can be expressed as a fixed point using $\s$ defined by \eqref{eq:nu1}: $F = \s(\lambda,\cdot) \, \G(F)$. Indeed the pair $(0, x \delta + f) \in \E$, where $\E$ is defined by \eqref{eq:defD}, is a solution to \eqref{eq:pb1} if and only if $\mu f = \G(x\delta +f)$ because $\mu(0)=0$.


\item Under the weak selection assumption \eqref{hyp:th1-1}, equation \eqref{eq:deflambda} for $\lambda\ge 0$, has a solution for every $F \in \C \subset \L^1(\R)$ because: $\int_\R \nu(0,z) \, \G(F)(z) \, dz = +\infty$. If assumption \eqref{hyp:th1-1} does not hold true, it may hapen that $\int_\R \nu(0,z) \, \G(F)(z) \, dz<1$ for some $F \in \L^1(\R)$. Thus equation \eqref{eq:deflambda} may have no solution. In such a case $\lambda=0$ and $F$ may have a Dirac part at the origin. This Dirac mass compensates the lack of mass of the regular part.

\medskip

First, this leads to consider the following convex subset of $\E$:
\begin{equation} \label{eq:defC}
\tilde{\mathcal{C}}=\left\{ m \delta + F \in \E \, \left\vert\,  m\ge 0, F\geq 0, m+\int_\R F(z) \ud z =1,  F(-z)=F(z), \int_\R \abs{z}^2 F(z) \ud z \le 2 \var G \right.\right\}.
\end{equation}

For $x\delta + f$ in $\E$ such that $x + \int_\R f \ud z\ne 0$, we have:
\begin{equation} \label{eq:GE}
\G(x\delta + f) = \frac1{x + \int_\R f \ud z} \left(x^2 \, G + 4 x \, G * f(2\cdot) + 4 \, G * f(2\cdot) * f(2\cdot) \right).
\end{equation}
Thus, $\G(x\delta + f) $ is an $\L^{1}(\R) \cap \L^{\infty}(\R)$ function. 

\medskip

Second, for $F$ in $\tilde\C$, we define the mixing operator $\tilde\M$ by:
\begin{equation} \label{eq:def-M3}
\tilde\M F =
\begin{cases}
\nu(\lambda,\cdot) \, \G(F) & \text{if } m=1- \int_{\R} \nu(0,\cdot) \, \G (F) \ud z < 0, \\
\nu(0,\cdot) \, \G(F) + m \delta & \text{if } m\ge 0, \\
\end{cases}
\end{equation}
where, in the first case, $\lambda>0$ is still defined by equation \eqref{eq:deflambda}, while in the second case, $\lambda=0$.

As a by-product of this definition, we have $m \times \lambda =0$, so that $m$ acts as a degree of freedom when $\lambda = 0$.


\end{enumerate}

\medskip

The Schauder fixed point theorem is applied to the operator $\tilde \M$ on the convex set $\tilde \C$, to prove Theorem \ref{th3}. Let us denote $p_1,p_2$ the canonical projections:
\[
p_1:\E\to \R, \quad p_2:\E\to \L^{1}(\R).
\]

\begin{itemize}
\item The set $\tilde{\mathcal{C}}$ is a nonempty, bounded, closed, convex subset of $\E$.
It is preserved by $\tilde \M$ thanks to definition \eqref{eq:def-M3}.
\item The set $p_1(\tilde{\M}(\tilde{\mathcal{C}})) \subset [0,1]$ is bounded and $p_2(\tilde{\M}(\tilde{\mathcal{C}})) \subset p_2(\tilde{\mathcal{C}})$ is relatively compact thanks to the Fr\'echet–Kolmogorov theorem.
Thanks to \eqref{eq:tails}, the tails of the elements of $p_2(\tilde \M(\tilde \C))$ are uniformly bounded. As in the previous case, from the definition of $\tilde \M$, \eqref{eq:def-M3}, we get:
\begin{equation*}
\begin{split}
 \norm{\tau_h p_2 (\tilde \M F) - p_2 (\tilde \M F)}_{\L^1(\R)} = \norm{\tau_h \nu(\lambda,\cdot) \, \G F - \nu(\lambda,\cdot) \, \G F}_{\L^1(\R)}\\
 \le \norm{\tau_h \s(0,\cdot) - \s(0,\cdot)}_{\L^1(\R)} \norm{G}_{\L^\infty(\R)}
+ \norm{\s(0,\cdot)}_{\L^1(\R)} \norm{\tau_h G - G}_{\L^\infty(\R)},
\end{split}
\end{equation*}
because $\lambda \ge 0$. This expression goes to $0$ as $h$ tends to $0$ because: $\s(0,\cdot)$ is in $\L^1(\R)$, see Assumption \eqref{hyp:th1-2}, and $G$ is uniformly continuous.

\item The operator $\tilde\M:\tilde\C\to\tilde\C$ is continuous.
Let $F_1, F_2 \in \tilde\C$, and let us denote: $\lambda_k = \lambda(F_k)$, $\s_k=\s(\lambda_k,\cdot)$, for $k=1,2$. We have: $\lambda_k\ge 0$ and $\norm{\nu_k}_{\L^1(\R)}\le \norm{\nu(0,\cdot)}_{\L^1(\R)}<\infty$.

First, the map $\tilde \C \ni F \mapsto \G (F) \in \L^{\infty}(\R)$ is continuous as a consequence of \eqref{eq:GE}, \eqref{eq:a2b2} and Young's inequality. Then, definition \eqref{eq:def-M3} yields:
\begin{eqnarray*}
\abs{p_1(\tilde \M F_1) - p_1(\tilde \M F_2)} & = & \abs{\left(1- \norm{\nu(0,\cdot) \, \G F_1}_{\L^1(\R)}\right)^+ - \left(1- \norm{\nu(0,\cdot) \, \G F_2}_{\L^1(\R)} \right)^+} \\
&\le& \abs{ \norm{\nu(0,\cdot) \, \G F_1}_{\L^1(\R)} - \norm{\nu(0,\cdot) \, \G F_2}_{\L^1(\R)} } \\
&\le& \norm{\nu(0,\cdot) \, (\G F_2 - \G F_1)}_{\L^1(\R)} \\
&\le& \norm{\nu(0,\cdot)}_{\L^1(\R)} \norm{\G F_2 - \G F_1}_{\L^\infty(\R)},
\end{eqnarray*}
thus, $p_1 \circ \tilde \M : \tilde\C\to \R$ is continuous.

Thanks to \eqref{eq:cont}, proving the continuity of $p_2\circ \tilde \M$ comes down to check the continuity of the maps $\tilde \C \ni F \mapsto \G (F) \in \L^{\infty}(\R)$ and $\tilde \C \ni F \mapsto \s \in \L^1(\R)$. The continuity of the first map has just been proved. Using \eqref{eq:contlambdanu}, the continuity of the second map is a consequence of the continuity of $\tilde\C \ni F \mapsto \lambda \in \R$, which is obtained using \eqref{eq:defIlambda}.

%

\end{itemize}
\end{proof}

\subsection{The age-structured recombination-selection equation}

The proof of Theorem \ref{th2} follows the same steps. The differences are that a first reduction step is needed but no lower bound on the eigenvalue has to be proved because it is provided by Assumption \eqref{hyp:th2}. 


\begin{proof}[Proof of theorem \ref{th2}] 
\begin{enumerate}
\item A solution to equation \eqref{eq:pb2} is determined by the phenotypical distribution at age $0$:
\begin{equation} \label{eq:defP}
\P=F(0,\cdot),
\end{equation}
through: $F(a,z) = F(0,z)  s(a,z) e^{-\lambda a}$, where: 
\begin{equation}
s(a,z)=e^{-\int_0^a \mu(a^\prime,z) \ud a^\prime},
\end{equation}
is the survival function.

Furthermore the function $\P$ can be expressed as a fixed point: $\P = \G\left( \s(\lambda,\cdot)\,  \P \right),$ where:
\begin{equation}\label{eq:nu2}
\s(\lambda,z) = \int_0^\infty\!\!\beta(a) s(a,z) e^{-\lambda a} \ud a.
\end{equation}



\item \textbf{Claim.} For any non-negative function $\P$ satisfying $\int_\R \P \ud z =1$, there exists a unique $\lambda \in \R$ such that:
\begin{equation} \label{eq:deflambda2}
\int_\R \s(\lambda,z) \P(z) \ud z = 1.
\end{equation}
This defines a function: $\P \mapsto \lambda(\P)\in \R$.\\

This claim is a consequence of Lemma \ref{lem:Laplace} applied to the function $a \mapsto \beta(a) \int_\R  s(a,z) \P(z) \ud z$, because Assumption \eqref{hyp:th2} implies:
\[
1= \int_\R \P(z) \ud z
\le
\int_\R  \int_{0}^\infty  e^{-\lambda_0 a} \beta(a) s(a,z) \P(z) \ud z \ud a
\le 
\int_{0}^\infty  e^{-\lambda_0 a} \beta(a) \ud a < \infty,
\]
using $s(a,z)\le 1$. We get the uniform lower bound $\lambda(P) \ge \lambda_0$.

\item We define the mixing operator:
\begin{equation} \label{eq:def-M2}
\M(\P) = \G\left( \s(\lambda(\P),\cdot) \,  \P \right).
\end{equation}
\end{enumerate}

As previously, the Schauder theorem on $\mathcal C$, defined by \eqref{eq:defC}, is applied to conclude.

\begin{itemize}
\item The set $\C$ is preserved by $\M$. The last estimate, is a consequence of Lemmas \ref{lem:moments}, \ref{lem:Var_sphi} with $\nu = \s(\lambda,\cdot)$ and $W(z)=z^2$:
\[
\var \M(\P)  = \var G + \frac12 \var \Big( \s(\lambda(\P),\cdot) \,  \P \Big) \le \var G + \frac12 \var  \P \le 2 \var G.
\]

\item The set $\M(\mathcal{C})$ is relatively compact thanks to the Fréchet–Kolmogorov theorem.
The estimate on $\norm{\tau_h (\M \P) - \M \P}_{\L^1(\R)}$ is easier to obtain:
\begin{eqnarray*}
\norm{\tau_h (\M \P) - \M \P}_{\L^1(\R)} &\le &  \norm{\tau_h G -G}_{\L^1(\R)} \norm{\s(\lambda(\P),\cdot) \P}_{\L^1(\R)} \norm{\s(\lambda(\P),\cdot) \P}_{\L^1(\R)}\\
& = & \norm{\tau_h G -G}_{\L^1(\R)} \to 0.
\end{eqnarray*}

\item The operator $\M:\mathcal{C}\to\mathcal{C}$ is continuous.
Let $\P_1, \P_2 \in \C$, and let us denote: $\lambda_k = \lambda(\P_k)$, $\s_k=\s(\lambda_k,\cdot)$, for $k=1,2$.
Assumption \eqref{hyp:th2} implies $\lambda_k \ge \lambda_0$ and $\norm{\s_k}_{\L^\infty(\R)} \le \norm{\s(\lambda_0,\cdot)}_{\L^\infty(\R)} \le \int_0^\infty e^{-\lambda_0 a} \beta(a) \ud a< \infty$.

\medskip

Thanks to \eqref{eq:def-Gconv} and \eqref{eq:a2b2} we have:
\begin{eqnarray*}
&& \norm{\M(\P_1)-\M(\P_2)}_{\L^1(\R)} \\
&\le& \norm{G}_{\L^1(\R)} \norm{(\s_1 \P_1) * (\s_1 \P_1) - (\s_2 \P_2) * (\s_2 \P_2)}_{\L^1(\R)} \\
&=& \norm{\left[ (\s_1 - \s_2) \P_1 + \s_2 (\P_1-\P_2) \right] * \left(\s_1 \P_1 + \s_2 \P_2\right) }_{\L^1(\R)} \\
&\le& \left(\norm{\s_1 - \s_2}_{\L^\infty(\R)} \norm{\P_1}_{\L^1(\R)} + \norm{\s_2}_{\L^\infty(\R)} \norm{\P_1-\P_2}_{\L^1(\R)} \right)
2 \int_0^\infty e^{-\lambda_0 a} \beta(a) \ud a
\end{eqnarray*}

It remains to prove that $\L^1(\R) \ni P \mapsto \s \in \L^\infty(\R)$ is continuous.
Let us fix $\P_1$, define $\lambda = \lambda_2-\lambda_1$, and:  $\h(x) = \abs{1 - e^{-x}}$.

\medskip

As previously, we have:
\begin{equation} \label{eq:sP1}
\abs{\int_\R (\s_1-\s_2) \P_1 \ud z} = \abs{ \int_\R \s_2 (\P_1-\P_2) \ud z} \le \norm{\P_1-\P_2}_{\L^1(\R)} \int_0^\infty e^{-\lambda_0 a} \beta(a) \ud a.
\end{equation}

Let us define the function $\H(\lambda)$ by:
\begin{eqnarray} \label{eq:sP2}
\H(\lambda) &:=& \int_\R \int_0^\infty \h(\lambda a) \, e^{-\lambda_1 a} \P_1(z) s(a,z) \beta(a) \ud a \ud z  \nonumber\\
&=& \abs{ \int_\R \int_0^\infty \left(1 - e^{-(\lambda_2-\lambda_1) a}\right) e^{-\lambda_1 a} \P_1(z) s(a,z) \beta(a) \ud a \ud z} \nonumber\\
&=& \abs{\int_\R (\s_1-\s_2) \P_1 \ud z}. 
\end{eqnarray}
The equalities hold because $a\mapsto e^{-\lambda_1 a} - e^{-\lambda_2 a}$ has a constant sign.

It is easily checked that the function $\H$ enjoys the following properties of $\h$: $\H(0)=0$, it is continuous, decreasing for $\lambda \le 0$ and increasing for $\lambda \ge 0$. In particular: as $\H(\lambda)$ goes to $0$, $\abs{\lambda}$ tends to $0$.

As $\norm{\P_1-\P_2}_{\L^1(\R)}$ goes to $0$ we get that $\abs{\lambda}=\abs{\lambda_2-\lambda_1}$ tends to $0$ thanks to \eqref{eq:sP1}, \eqref{eq:sP2}. Lastly:
\[
\norm{\s_1 - \s_2}_{\L^\infty(\R)} \le \int_0^\infty \abs{e^{-\lambda_1 a} - e^{-\lambda_2 a}} \beta(a) \ud a
\]
tends to $0$ as $\abs{\lambda_1-\lambda_2}$ goes to $0$. 
\end{itemize}

\end{proof}

\section{Conclusions and perspectives}

As the Kre\u{\i}n-Rutman theory could not be applied to Problems \eqref{eq:pb1}, \eqref{eq:pb2}, because the operator $\G$ is nor linear nor monotone,
we used the Schauder fixed point theorem to prove the existence of principal eigenelements. Another difficulty was that the dispersion relation was not explicit.
Using regularizing effects of the operator $\G$ and properties of moments we were able to build some convenient convex subset of $\L^1(\R)$ on which the Schauder theorem was applied, to prove the joint existence of a principal eigenvector and of the corresponding principal eigenvalue.

\medskip

The previous approach can be straightforwardly adapted to the case of a genetic component of offspring traits $z$ being a convex combination of the parental traits $z^\prime,  z^{\sec}$: $z=\alpha z^\prime + \beta z^{\sec}$ with $\alpha + \beta = 1$ and $0<\alpha,\beta<1$. The resulting operator $\G$ is defined as:
\begin{eqnarray*}\label{eq:def-Gbis}
\G f(z) &=& \dfrac1{\int_\R f(x)\, dx} \, \iint_{\R^2} f(z^\prime) f(z^\sec) \, G\left(z - (\alpha z^\prime + \beta z^{\sec})\right) \ud z^\prime \ud z^\sec,\\
        &=& \frac{1}{\alpha \beta} \dfrac1{\int_\R f(x)\, \ud x} \, \left(G * f\left(\frac{\cdot}{\alpha}\right) * f\left(\frac{\cdot}{\beta}\right)\right)(z).
\end{eqnarray*}

This operator conserves the 0th order and the 1st order moments, and the main confinement property remains true: for $\phi \in \L^1(\R)$ satisfying $\int_\R \phi(z) \ud z =1$ and $\int_\R z \phi(z) \ud z = 0$, we have:
\[
\var\G(\phi) = \var G + (\alpha^2 + \beta^2) \var \phi,
\]
with $\alpha^2+\beta^2 < 1$.

\bigskip

It would be interesting to include more general birth rates than the constant ones in our analysis. Weither the principal eigenelements are unique or not is also an interesting question. As concerns the asymptotic behaviour of Equations \eqref{eq:noage}, \eqref{eq:age}, convergence to natural attractors could be investigated: $e^{-\lambda t} f(t,z) \to F(z)$, resp. $e^{-\lambda t} f(t,a,z) \to F(a,z)$, in some weighted norm, where $F(z),F(a,z)$ are defined by \eqref{eq:pb1}, \eqref{eq:pb2}. 






\bigskip

In the context of a changing environment, see \cite{CR} for instance, we can assume that the optimal trait moves at a speed $c$,
by changing the mortality rate $\mu(z)$ by $\mu(z-ct)$ in Equations \eqref{eq:noage}, \eqref{eq:age}.
For instance, for the homogeneous model, in the moving frame $\tilde z = z - ct$, Problem \eqref{eq:pb1} writes:
\[
\begin{cases}
\ds  \lambda F(z) - c \, \p_z F(z) + \mu(z) F(z) = \G \left(F \right)(z), \medskip\\
\ds \int_\R F(z) \ud z = 1.
\end{cases}
\]

For this problem, the function $F$ cannot be supposed to be symmetric.
Thus our confinement proof cannot be applied to this case. 
Other models could also be  studied: cyclic changes of environment, space structured models with migration, see \cite{CR}.

\vspace*{\fill}

\textbf{Acknowledgments.} The authors thank Dr. \'E. Bouin for fruitful discussions. T.B's, V.C.'s, J.G.'s research is supported by the European Research Council (ERC) under the European Union's Horizon 2020 research and innovation programme, ERC Starting Grant MESOPROBIO No. 639638.

\vspace*{\fill}

\newpage

\bibliographystyle{abbrv}
\bibliography{biblio}

\end{document}